\documentclass[conference,letterpaper]{IEEEtran}
\IEEEoverridecommandlockouts
\usepackage{cite}
\usepackage{url}
\usepackage{ifthen}
\usepackage{algorithm, algorithmicx, algpseudocode}
\usepackage{graphicx} 
\usepackage{textcomp}
\usepackage{booktabs}
\usepackage{makecell} 
\usepackage[cmex10]{amsmath}
\usepackage{amssymb,amsfonts}
\usepackage[left=1.62cm,right=1.62cm,top=1.85cm]{geometry}
\usepackage{pifont}
\usepackage{amsthm}
\usepackage{mathrsfs}
\def\BibTeX{{\rm B\kern-.05em{\sc i\kern-.025em b}\kern-.08em T\kern-.1667em\lower.7ex\hbox{E}\kern-.125emX}}
\usepackage{tikz}
\usetikzlibrary{automata,positioning,calc}
\usetikzlibrary{intersections}
\usetikzlibrary{decorations.pathreplacing,angles,quotes}

\usepackage{bm}
\usepackage{amscd}

\algnewcommand{\Initialize}[1]{%
  \State \textbf{Initialization:}
  \Statex \hspace*{\algorithmicindent}\parbox[t]{0.8\linewidth}{\raggedright #1}
}
\makeatletter

\theoremstyle{definition}
\newtheorem{theorem}{Theorem}
\newtheorem{definition}{Definition}
\newtheorem{lemma}{Lemma}
\newtheorem{prop}{Proposition}

\newtheorem{remark}{Remark}
\newtheorem{eg}{Example}

\newcommand{\norm}[1]{\left\lVert#1\right\rVert}

\newcommand{\relmiddle}[1]{\mathrel{}\middle#1\mathrel{}} 

\def\br{\mathbb R}

\def\vE{\mathbb E}

\font\b=cmr10 scaled\magstep4

\def\bigzerou{\smash{\lower1.7ex\hbox{\b 0}}}
\def\bigzerou{\smash{\lower1.7ex\hbox{\b 0}}}

\begin{document}
\title{Kolmogorov--Nagumo Mean Frameworks for Conditional Entropy
\thanks{This work was supported by JSPS KAKENHI Grant Numbers 26K14712 and JST CRONOS Grant Number JPMJCS25N5.}
}

\author{
\IEEEauthorblockN{Akira Kamatsuka}
\IEEEauthorblockA{Shonan Institute of Technology \\ 
Email: \text{kamatsuka@info.shonan-it.ac.jp}
 }
\and
\IEEEauthorblockN{Takahiro Yoshida}
\IEEEauthorblockA{Nihon University \\ 
Email: \text{yoshida.takahiro@nihon-u.ac.jp}
 } 
}
\maketitle

\begin{abstract}
This study focuses on conditional entropy frameworks based on the Kolmogorov--Nagumo (KN) mean.
First, $(\eta, \psi)$-KN averaging (\texttt{EPKNAVG}), a KN-mean extension of the $\eta$-averaging (\texttt{EAVG}) framework for $(\eta, F)$-entropies, is introduced and proven to be equivalent to \texttt{EAVG} under suitable concavification conditions.
Second, motivated by generalized $g$-vulnerability, a new framework is proposed for generalized $g$-conditional entropies.
This framework captures conditional entropies beyond the scope of \texttt{EAVG}-type representations.
In particular, it is shown that there exists an $\alpha$ and a joint probability distribution $p_{X, Y}$ such that the Augustin--Csisz{\' a}r conditional entropy $H_{\alpha}^{\mathrm{C}}(X|Y)$ cannot be represented by any $(\eta,F)$-entropy satisfying \texttt{EAVG}. In contrast, it is represented within the proposed framework.
Furthermore, sufficient conditions are derived under which the proposed generalized $g$-conditional entropies satisfy the conditioning reduces entropy property and the data-processing inequality.
\end{abstract}

\section{Introduction} \label{sec:intro}
Shannon entropy $H(X)$ and conditional entropy $H(X|Y)$ are among the most fundamental quantities in information theory.
Since their introduction, numerous generalized entropy measures have been proposed to capture diverse operational and statistical aspects of uncertainty.
Representative examples include R{\' e}nyi entropy $H_{\alpha}(X)$ \cite{renyi1961measures}, Havrda--Charv{\' a}t--Tsallis entropy $S_{\alpha}(X)$ \cite{Havrda1967QuantificationMO,Tsallis:1988aa}, and Sharma--Mittal entropy $H_{\alpha,\beta}(X)$ \cite{sharma1975new}.

Among the mathematical tools underlying these generalizations, generalized means play an important role.
In particular, the class now commonly referred to as the Kolmogorov--Nagumo (KN) mean, or quasi-arithmetic mean, originates from the classical works of Kolmogorov \cite{Kolmogorov1930}, Nagumo \cite{193071}, and de Finetti \cite{deFinetti1931}.
These generalized means were also studied from an axiomatic viewpoint (see, e.g., Hardy, Littlewood, and P{\' o}lya \cite{hardy1988inequalities}).
In information theory, R{\' e}nyi \cite{renyi1961measures} considered replacing the arithmetic mean in the mean-value postulate for Shannon entropy with a general KN mean.
Related axiomatic characterization results were subsequently developed by Dar{\' o}czy \cite{Daroczy:1964aa} (see also \cite{csiszar2011information,Solr-990001887550206443} for broader discussions of generalized means and axiomatic characterizations of information measures).

Along with generalized entropy measures, the notion of conditional entropy has also been extended in various directions.
For R{\' e}nyi entropy, several nonequivalent conditional versions have been introduced and extensively studied.
Examples include the Arimoto conditional entropy $H_{\alpha}^{\text{A}}(X|Y)$ \cite{arimoto1977} and the Hayashi conditional entropy $H_{\alpha}^{\text{H}}(X|Y)$ \cite{5773033}.
These conditional entropies satisfy important properties such as conditioning reduces entropy (\texttt{CRE}) and the data-processing inequality (\texttt{DPI}).
They also arise from distinct operational considerations.
The Arimoto conditional entropy is closely related to order-$\alpha$ channel capacity and Gallager-type error-exponent analysis \cite{arimoto1977}, whereas the Hayashi conditional entropy is used in the analysis
of leaked information in privacy amplification and related secrecy problems \cite{5773033}.
Extensive theoretical properties and operational interpretations of R{\' e}nyi-type conditional entropies have been reported in the literature \cite{6191351,10.1007/978-3-319-04268-8_7,6898022,8007073,ecea-4-05030,e22050526}.
Conditional versions of other generalized entropies have also been investigated.
In particular, several definitions of conditional entropy associated with the Havrda--Charv{\' a}t--Tsallis entropy have been studied in \cite{10.1063/1.2165744,Manije20133742,e23111427}.

Closely related research has emerged from statistical decision theory and quantitative information flow (QIF).
In statistical decision theory, Bayes loss \cite{grunwald2004} (also referred to as generalized entropy \cite{doi:10.1198/016214506000001437,Dawid:2007aa} or the Bayes envelope \cite{720534}) has been studied as a decision-theoretic measure of uncertainty induced by a decision problem.
In QIF, vulnerability, a dual notion to entropy, has been widely used to quantify information leakage.
For example, Alvim \textit{et al.} \cite{6266165,6957119,7536368,ALVIM201932} introduced $g$-vulnerability based on an adversarial threat model characterized by a gain function $g$.
Zarrabian and Sadeghi \cite{Zarrabian:2025aa} further generalized $g$-vulnerability by replacing the expectation operator with the KN mean.
This replacement allows the adversarial gain to be aggregated through a nonlinear mean rather than only through its expected value.
They also established a connection between generalized $g$-vulnerability and the Arimoto conditional entropy.
Recently, Kamatsuka and Yoshida \cite{11195284,11654062} derived similar connections for other R{\' e}nyi-type conditional entropies, including the Hayashi conditional entropy.
Additionally, they introduced the Augustin--Csisz{\' a}r conditional entropy $H_{\alpha}^{\text{C}}(X|Y)$ as a R{\' e}nyi-type conditional entropy by expressing the corresponding Augustin--Csisz{\' a}r mutual information \cite{augusting_phd_thesis,370121} as the difference between an entropy-like quantity and a conditional entropy-like quantity.

A different but related line of axiomatic research has focused on broad classes of entropy- and vulnerability-like measures rather than on the uniqueness characterization of a particular entropy functional.
Alvim \textit{et al.} \cite{ALVIM201932} and Am{\' e}rico \textit{et al.} \cite{9064819,9505206} developed general frameworks in which entropy and conditional entropy-like measures are studied through structural properties and aggregation rules.
Within these frameworks, entropy and conditional entropy-like measures are characterized through concavity (\texttt{CV}) together with aggregation operators such as averaging (\texttt{AVG}) and minimization (\texttt{MIN}).
They are also studied through core-concavity (\texttt{CCV}) combined with more general aggregation schemes, including $\eta$-averaging (\texttt{EAVG}) and the $\eta$-geometric mean (\texttt{EGM}).
Furthermore, Am{\' e}rico \textit{et al.} \cite{9505206} introduced a unifying class $\mathcal{Q}$ that encompasses a broad range of such constructions.
They also investigated conditions under which these conditional entropy-like measures satisfy \texttt{CRE} and \texttt{DPI}.
Recently, Zarrabian and Sadeghi \cite{Zarrabian:2025aa} studied analogous properties for generalized $g$-vulnerability.

The present study follows this latter line of research.
More specifically, we investigate how the use of a general KN mean affects the class of conditional entropy-like measures representable within these frameworks.
Our first question is whether replacing the expectation operator in the \texttt{EAVG} framework with a general KN mean enlarges the class of representable conditional entropies.
Our second question is whether generalized $g$-conditional entropy constructions based on KN means can represent conditional entropies that are not captured by \texttt{EAVG}.
Thus, our objective is not to derive a uniqueness characterization of a particular conditional entropy, but rather to clarify the scope and limitations of KN-mean-based extensions of existing frameworks for conditional entropy-like measures.

To address these questions, we study two KN-mean-based constructions.
The first is obtained by replacing the expectation operator in the \texttt{EAVG} framework with a general KN mean.
The second is based on generalized $g$-vulnerability and generalized $g$-Bayes vulnerability.
The main contributions of this study are summarized as follows:
\begin{itemize}
\item We introduce the $(\eta,\psi)$-KN averaging (\texttt{EPKNAVG}) class of conditional entropy, which extends the \texttt{EAVG} framework of Am{\' e}rico \textit{et al.} \cite{9064819,9505206} by incorporating the KN mean (Definition \ref{def:EPKNAVG}).
We prove that, under suitable concavification conditions, \texttt{EPKNAVG} is equivalent to \texttt{EAVG} (Theorem \ref{thm:EPKNAVG_EAVG_equivalent}).
Consequently, replacing the expectation operator with a general KN mean does not enlarge the class of representable conditional entropies under these conditions.
\item Building on the generalized $g$-vulnerability framework of Zarrabian and Sadeghi \cite{Zarrabian:2025aa} and the generalized $g$-Bayes vulnerability of \cite{11654062}, we study two generalized $g$-conditional entropy constructions.
The generalized $g$-posterior conditional entropy is directly related to the generalized posterior vulnerability of \cite{Zarrabian:2025aa}, whereas the generalized $g$-Bayes conditional entropy is based on the generalized $g$-Bayes vulnerability.
We prove that there exist an $\alpha\in(0,1)\cup(1,\infty)$ and a joint probability distribution $p_{X,Y}$ such that the Augustin--Csisz{\' a}r conditional entropy $H_{\alpha}^{\text{C}}(X|Y)$ cannot be represented within the \texttt{EAVG} framework of \cite{9064819,9505206}, whereas it admits a representation through the generalized $g$-Bayes construction (Proposition \ref{prop:AC_conditional_entropy_EAVG}).
We further establish \texttt{CRE} and sufficient conditions for \texttt{DPI} for the generalized $g$-conditional entropies (Theorems \ref{thm:CRE_DPI_g-posterior} and \ref{thm:CRE_DPI_g-Bayes}).
\end{itemize}

\section{Preliminaries}\label{sec:preliminaries}
Let $X$ and $Y$ be random variables taking values in finite alphabets $\mathcal{X}$ and $\mathcal{Y}$, respectively, with joint distribution $p_{X,Y} = p_{X} p_{Y|X}$.
Let $p_{Y}$ denote the marginal distribution of $Y$. 
Shannon and conditional entropies are defined by $H(X):=-\sum_{x}p_{X}(x)\log p_{X}(x)$, $H(X | Y):=-\sum_{x,y}p_{X,Y}(x,y)\log p_{X\mid Y}(x | y)$, respectively.
Let $\Delta_{\mathcal{X}}$ denote the probability simplex on $\mathcal{X}$. 
For a random variable $Z$ with distribution $p_{Z}$, the expectation and geometric mean are defined as $\mathbb{E}[Z]:=\sum_{z}p_{Z}(z)z$ and $\mathbb{G}[Z]:=\prod_{z}z^{p_{Z}(z)}$, respectively.
Throughout this paper, $\log$ denotes the natural logarithm.

In this section, we review several entropies and conditional entropies, along with axiomatic approaches to them.

\subsection{Various Entropies and Conditional Entropies} \label{ssec:entropy}
First, we begin by reviewing several generalized entropies and conditional entropies.

\begin{definition}
Let $\alpha\in (0, 1)\cup (1, \infty)$ and $\beta\neq 1$. For a probability distribution $p_{X}$, 
the \textit{R{\' e}nyi entropy of order $\alpha$} \cite{renyi1961measures}, the \textit{Havrda--Charv{\' a}t--Tsallis entropy of order $\alpha$} \cite{Havrda1967QuantificationMO,Tsallis:1988aa}, and the \textit{Sharma--Mittal entropy of order $(\alpha, \beta)$} \cite{sharma1975new} are defined as follows:
\begin{align}
H_{\alpha}(X)=H_{\alpha}(p_{X}) &:= \frac{1}{1-\alpha} \log \sum_{x} p_{X}(x)^{\alpha}, \label{eq:Renyi_ent} \\
S_{\alpha}(X)=S_{\alpha}(p_{X}) &:= \frac{1}{1-\alpha}\left( \sum_{x}p_{X}(x)^{\alpha} - 1 \right), \label{eq:Tsallis_ent} \\ 
H_{\alpha, \beta}(X)=H_{\alpha,\beta}(p_{X}) &:= \frac{1}{1-\beta} \left( \left( \sum_{x}p_{X}(x)^{\alpha} \right)^{\frac{1-\beta}{1-\alpha}} - 1 \right). \label{eq:SM_ent}
\end{align}
\end{definition}

\begin{remark}
We have $H_{\alpha, \alpha}(X) = S_{\alpha}(X)$, and each of the above entropies converges to the Shannon entropy in the limit as $\alpha=\beta\to 1$.
The entropies listed above are generally quasi-concave with respect to the distribution $p_{X}$, while some become concave under suitable conditions on the parameters. 
In particular, $H(X)$ and $S_{\alpha}(X)$ are concave in $p_{X}$ \cite[Thm. 2.7.3]{Cover:2006:EIT:1146355},\cite[Lemma 1]{10.1063/1.530920}.
Furthermore, for $0 < \alpha < 1$, the R{\' e}nyi entropy $H_{\alpha}(X)$ is concave \cite[Thm. 1]{1055890}. 
Additionally, the Sharma--Mittal entropy $H_{\alpha, \beta}(X)$ is concave whenever $\beta \geq 2-1/\alpha$, \cite[Prop. 29]{hoffmann2006concavity}.
\end{remark}

Notably, using the $\alpha$-norm $\norm{\cdot}_{\alpha}$ of the probability distribution $p_{X}$ together with the $q$-logarithm function $\ln_{q}t$, where $q\in\mathbb{R}$, defined below, all of the entropies can be represented in a unified form: 
\begin{align}
\ln_{q}t &:= 
\begin{cases}
\log t, & q=1, \\ 
\frac{t^{1-q}-1}{1-q}, & q\neq 1.
\end{cases} 
\end{align}

\begin{prop}
The following holds:
\begin{align}
H_{\alpha}(X) &= \log \norm{p_{X}}_{\alpha}^{\frac{\alpha}{1-\alpha}}, \label{eq:Renyi_ent_KN} \\
S_{\alpha}(X) &= \ln_{\alpha}\norm{p_{X}}_{\alpha}^{\frac{\alpha}{1-\alpha}}, \label{eq:Tsallis_ent_KN} \\ 
H_{\alpha, \beta}(X) &= \ln_{\beta}\norm{p_{X}}_{\alpha}^{\frac{\alpha}{1-\alpha}}. \label{eq:SM_ent_KN}
\end{align}
\end{prop}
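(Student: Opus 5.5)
The plan is to compute the common quantity $\norm{p_{X}}_{\alpha}^{\frac{\alpha}{1-\alpha}}$ once and then substitute it into each logarithm. By definition of the $\alpha$-norm, $\norm{p_{X}}_{\alpha} = \left(\sum_{x} p_{X}(x)^{\alpha}\right)^{1/\alpha}$. Hence
\begin{align}
\norm{p_{X}}_{\alpha}^{\frac{\alpha}{1-\alpha}} = \left(\sum_{x} p_{X}(x)^{\alpha}\right)^{\frac{1}{1-\alpha}}.
\end{align}
This identity is the only shared ingredient. The three claims then follow by applying $\log$, $\ln_{\alpha}$ and $\ln_{\beta}$ to it.

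For \eqref{eq:Renyi_ent_KN}, take the natural logarithm: $\log \left(\sum_{x} p_{X}(x)^{\alpha}\right)^{\frac{1}{1-\alpha}} = \frac{1}{1-\alpha}\log\sum_{x} p_{X}(x)^{\alpha}$, which is exactly \eqref{eq:Renyi_ent}.

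For \eqref{eq:SM_ent_KN}, use the branch $q=\beta\neq 1$ of $\ln_{q}t = \frac{t^{1-q}-1}{1-q}$ with $t=\left(\sum_{x} p_{X}(x)^{\alpha}\right)^{\frac{1}{1-\alpha}}$. Since $t^{1-\beta} = \left(\sum_{x} p_{X}(x)^{\alpha}\right)^{\frac{1-\beta}{1-\alpha}}$, this reproduces \eqref{eq:SM_ent}.

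Finally, \eqref{eq:Tsallis_ent_KN} is the special case $\beta=\alpha$ of \eqref{eq:SM_ent_KN}. In that case the exponent $\frac{1-\beta}{1-\alpha}$ equals $1$, so we recover \eqref{eq:Tsallis_ent}, consistent with the earlier remark that $H_{\alpha,\alpha}(X)=S_{\alpha}(X)$.

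There is no real obstacle here. The only points of care are the exponent bookkeeping $\frac{1}{\alpha}\cdot\frac{\alpha}{1-\alpha}=\frac{1}{1-\alpha}$ and checking that the correct branch of $\ln_{q}$ applies. That branch is valid because $\alpha\neq 1$ and $\beta\neq 1$ by hypothesis. Positivity of $\sum_{x} p_{X}(x)^{\alpha}$ guarantees that all real powers and logarithms are well defined.
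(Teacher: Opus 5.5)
Your proposal is correct and takes the same approach as the paper, which just says the identities follow by straightforward algebraic manipulation. You carry that manipulation out explicitly: you compute $\norm{p_{X}}_{\alpha}^{\frac{\alpha}{1-\alpha}} = \left(\sum_{x}p_{X}(x)^{\alpha}\right)^{\frac{1}{1-\alpha}}$, apply $\log$ and $\ln_{\beta}$ to it, and obtain the Tsallis case by taking $\beta=\alpha$ (allowed since $\alpha\neq 1$).
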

\begin{proof}
These identities are derived through straightforward algebraic manipulations.
\end{proof}

For R{\' e}nyi entropy, several nonequivalent conditional versions have been proposed and extensively studied.
\begin{definition}
Let $\alpha\in (0, 1)\cup (1, \infty)$. Given a joint probability distribution $p_{X,Y}=p_{X}p_{Y\mid X}$, 
the \textit{Arimoto conditional entropy of order $\alpha$} \cite{arimoto1977}, the \textit{Hayashi conditional entropy of order $\alpha$} \cite{5773033}, and the \textit{Augustin--Csisz{\' a}r conditional entropy of order $\alpha$} \cite{11195284,11654062} are defined as follows: 
\begin{align}
H_{\alpha}^{\text{A}}(X \mid Y)&:= \frac{\alpha}{1-\alpha}\log\sum_{y} \left( \sum_{x}p_{X}(x)^{\alpha}p_{Y\mid X}(y\mid x)^{\alpha} \right)^{\frac{1}{\alpha}}, \label{eq:Arimoto_cond_renyi_ent} \\
H_{\alpha}^{\text{H}}(X \mid Y)&:= \frac{1}{1-\alpha}\log\sum_{y}p_{Y}(y) \sum_{x}p_{X\mid Y}(x \mid y)^{\alpha}, \label{eq:Hayashi_cond_renyi_ent} \\ 
H_{\alpha}^{\text{C}}(X\mid Y) &:= \min_{r_{X\mid Y}} \frac{\alpha}{1-\alpha}\notag \\ 
&\times \sum_{x}p_{X}(x)\log \sum_{y}p_{Y\mid X}(y\mid x)r_{X\mid Y}(x\mid y)^{1-\frac{1}{\alpha}}, \label{eq:AC_cond_renyi_ent}
\end{align}
where the minimum in \eqref{eq:AC_cond_renyi_ent} is taken over all conditional distributions $r_{X\mid Y}=\{r_{X\mid Y}(\cdot\mid y)\}_{y\in \mathcal{Y}}$.
\end{definition}
\begin{remark}
These R{\' e}nyi-type conditional entropies coincide with the conditional entropy $H(X | Y)$ in the limit as $\alpha \to 1$.
\end{remark}

Conditional counterparts of the Havrda--Charv{\' a}t--Tsallis entropy and the Sharma--Mittal entropy have been proposed in the literature.
\begin{definition}
Let $\alpha\in (0, 1)\cup (1, \infty)$ and $\beta\neq 1$. Given a joint probability distribution $p_{X,Y}=p_{X}p_{Y\mid X}$, 
the Havrda--Charv{\' a}t--Tsallis conditional entropy \cite{Manije20133742} and the Sharma--Mittal conditional entropy \cite{9064819,e24010039} are respectively defined as follows: 
\begin{align}
S_{\alpha}^{\text{M}}(X\mid Y) &:= \mathbb{E}_{Y}\left[S_{\alpha}(p_{X\mid Y}(\cdot \mid Y))\right], \label{eq:Manije_cond_Tsallis_ent} \\
H_{\alpha, \beta}^{\text{AKM}}(X\mid Y) &:= \ln_{\beta} \left( \mathbb{E}_{Y}\left[\norm{p_{X\mid Y}(\cdot\mid Y)}_{\alpha}^{\alpha}\right] \right)^{\frac{1}{1-\alpha}}. \label{eq:Americo_cond_SM_ent}
\end{align}
\end{definition}
\begin{remark}
Alternative conditional versions of the Havrda--Charv{\' a}t--Tsallis conditional entropy have been proposed by Furuichi \cite{10.1063/1.2165744} and Teixeira \textit{et al.} \cite{e23111427}.
\end{remark}

In QIF, Zarrabian and Sadeghi \cite{Zarrabian:2025aa}, as well as Kamatsuka and Yoshida \cite{11654062} introduced generalized $g$-vulnerability, a dual notion of entropy based on the KN mean. 
Suppose that an adversary, characterized by a gain function $g(x, a)$, selects an action $A$ concerning the original data $X$ after observing the disclosed information $Y$.
The adversary employs a decision rule $\delta\colon\mathcal{Y}\to\mathcal{A}$ together with a gain function $g\colon\mathcal{X}\times\mathcal{A}\to\mathbb{R}$, where $\mathcal{A}$ denotes the action space.

\begin{definition}[Kolmogorov--Nagumo (KN) mean]
Let $I\subseteq \mathbb{R}$ be an interval and $\varphi\colon I\to \mathbb{R}$ be a strictly monotonic and continuous function. 
Given a random variable $Z$ with distribution $p_{Z}$, the \textit{KN mean} (also known as the \textit{quasi-arithmetic mean}) of $Z$ is defined as follows:
\begin{align}
\mathbb{M}_{\varphi}[Z] &:= \varphi^{-1} \left( \mathbb{E}\left[\varphi(Z)\right] \right) \\
&= \varphi^{-1}\left( \sum_{z}p_{Z}(z)\varphi(z) \right),
\end{align}
where $\varphi^{-1}$ denotes the inverse function of $\varphi$. 
Similarly, given another random variable $Y$, the \textit{conditional KN mean} of $Z$ given $Y=y$ is defined by
\begin{align}
\mathbb{M}_{\varphi}[Z\mid Y=y]
&:= \varphi^{-1}\left(\mathbb{E}[\varphi(Z)\mid Y=y]\right) \\
&= \varphi^{-1}\left(\sum_{z}p_{Z\mid Y}(z\mid y)\varphi(z)\right).
\end{align}
\end{definition}

\begin{eg}
Examples of the KN mean include the following: 
\begin{itemize}
\item If $\varphi(t)=at + b$, $a\neq 0$, then $\mathbb{M}_{\varphi}[Z] = \mathbb{E}[Z]$.
\item If $\varphi(t)=\log t$, then $\mathbb{M}_{\varphi}[Z] = \mathbb{G}[Z]$. 
\item If $\varphi(t)=\ln_{q} t$, then $\mathbb{M}_{\varphi}[Z] = \mathbb{H}_{1-q}[Z] := \{\mathbb{E}[Z^{1-q}]\}^{\frac{1}{1-q}}$ (H{\" o}lder mean of order $1-q$). 
\end{itemize}
\end{eg}

\begin{definition}[Generalized vulnerability] \label{def:generalized_vulnerability}
Let $(X,Y)\sim p_{X}p_{Y|X}$ and $g(x,a)$ be a gain function.
Let $I\subseteq \mathbb{R}$ be an interval and $\varphi,\psi\colon I\to\mathbb{R}$ be continuous and strictly monotone functions.
The \textit{generalized $g$-prior vulnerability}, \textit{generalized average $g$-posterior conditional vulnerability} \cite{Zarrabian:2025aa}, and \textit{generalized $g$-Bayes conditional vulnerability} \cite{11654062} are defined as follows:
\begin{align}
V_{\varphi, g}(X) &= \max_{a} \mathbb{M}_{\varphi}[g(X, a)], \label{eq:generalized_g_prior} \\ 
\hat{V}_{\psi, \varphi, g}(X\mid Y) &= \mathbb{M}_{\psi}\left[\max_{a} \mathbb{M}_{\varphi}\left[g(X, a)\relmiddle{|} Y\right]\right], \label{eq:generalized_g_posterior} \\
V_{\varphi, \psi, g}(X\mid Y) &= \max_{\delta} \mathbb{M}_{\varphi}\left[\mathbb{M}_{\psi}\left[g(X, \delta(Y))\relmiddle{|}X\right]\right], \label{eq:generalized_g_Bayes}
\end{align}
where the maximum in Eq.~\eqref{eq:generalized_g_Bayes} is taken over all decision rules $\delta\colon \mathcal{Y}\to \mathcal{A}$. 
\end{definition}
\begin{remark} \label{rem:generalized_vulnerability}
While $\hat{V}_{\psi, \varphi, g}(X | Y)$ \cite{Zarrabian:2025aa} generalizes the optimal value of posterior expected gain, 
$V_{\varphi, \psi, g}(X | Y)$ \cite{11654062} generalizes the optimal value of Bayes expected gain via KN means. 
Generally, $V_{\varphi,\psi,g}(X | Y)$ and $\hat{V}_{\psi,\varphi,g}(X | Y)$ do not coincide. However, they agree when $\varphi=\psi$, i.e.,
$\hat{V}_{\varphi, \varphi, g}(X | Y) = V_{\varphi, \varphi, g}(X | Y)$, as shown in \cite[Prop. 3]{11654062}.
\end{remark}

The generalized $g$-vulnerabilities satisfy the following fundamental properties.
\begin{prop} \label{prop:g-vulnerability_basic_propery}
Let $\eta$ be a strictly increasing and continuous function. Then, 
\begin{align}
\eta \left( \hat{V}_{\psi, \varphi, g}(X\mid Y) \right) = \hat{V}_{\psi\circ\eta^{-1}, \varphi\circ \eta^{-1}, \eta\circ g}(X\mid Y), \\
\eta \left( V_{\varphi, \psi, g}(X\mid Y) \right) = V_{\varphi\circ \eta^{-1}, \psi\circ \eta^{-1}, \eta\circ g}(X\mid Y).
\end{align}
\end{prop}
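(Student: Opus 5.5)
The plan is to establish a single auxiliary identity for KN means and then apply it at each layer of the nested definitions in Definition~\ref{def:generalized_vulnerability}. The identity is
\begin{align}
\eta\left(\mathbb{M}_{\varphi}[Z]\right) = \mathbb{M}_{\varphi\circ\eta^{-1}}[\eta(Z)],
\end{align}
valid for any strictly monotone continuous $\varphi$ and any strictly increasing continuous $\eta$. The same identity holds for the conditional KN mean given $Y=y$ or $X=x$. Its proof is a one-line computation. Since $(\varphi\circ\eta^{-1})(\eta(Z))=\varphi(Z)$, the right-hand side equals $(\varphi\circ\eta^{-1})^{-1}(\mathbb{E}[\varphi(Z)])=\eta(\varphi^{-1}(\mathbb{E}[\varphi(Z)]))$, which is the left-hand side.

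Before applying the identity, I will check domains. The function $\varphi\circ\eta^{-1}$ is defined on $\eta(I)$, which is again an interval because $\eta$ is continuous and strictly increasing. The composition is continuous and strictly monotone, so it is a valid KN generator.

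The second ingredient is that $\eta$ commutes with maximization: $\eta(\max_{a} f(a))=\max_{a}\eta(f(a))$, and likewise for the maximum over decision rules $\delta$. This holds precisely because $\eta$ is increasing. It is the only place where the assumption that $\eta$ is increasing, rather than merely monotone, is used.

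For $\hat{V}$, I will work from the outside in:
\begin{align}
\eta(\hat{V}_{\psi,\varphi,g}(X\mid Y)) &= \mathbb{M}_{\psi\circ\eta^{-1}}\left[\eta\left(\max_a \mathbb{M}_{\varphi}[g(X,a)\mid Y]\right)\right] \notag\\
&= \mathbb{M}_{\psi\circ\eta^{-1}}\left[\max_a \eta\left(\mathbb{M}_{\varphi}[g(X,a)\mid Y]\right)\right] \notag\\
&= \mathbb{M}_{\psi\circ\eta^{-1}}\left[\max_a \mathbb{M}_{\varphi\circ\eta^{-1}}[\eta(g(X,a))\mid Y]\right].
\end{align}
The last expression is $\hat{V}_{\psi\circ\eta^{-1},\varphi\circ\eta^{-1},\eta\circ g}(X\mid Y)$. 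In the first step the outer identity is applied to the random variable $Z=\max_a\mathbb{M}_\varphi[g(X,a)\mid Y]$, which is a function of $Y$.

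For $V_{\varphi,\psi,g}$ the order is reversed. I first pull $\eta$ inside $\max_\delta$. Then I apply the identity to the outer mean $\mathbb{M}_\varphi$ over the random variable $\mathbb{M}_\psi[g(X,\delta(Y))\mid X]$, which is a function of $X$. Finally I apply the conditional identity to the inner mean $\mathbb{M}_\psi[\cdot\mid X]$. This yields $\max_\delta \mathbb{M}_{\varphi\circ\eta^{-1}}[\mathbb{M}_{\psi\circ\eta^{-1}}[\eta(g(X,\delta(Y)))\mid X]]$, which is the claimed expression.

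No step poses a real obstacle. The only point needing care is bookkeeping. The range of $g$, and hence of every intermediate mean, must lie in $I$, so that after transformation every quantity lies in $\eta(I)$. Once this is checked, the composed generators are applied to values within their domains.
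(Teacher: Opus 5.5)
Your proposal is correct and matches the paper's proof. The paper runs the same chain of equalities: it passes $\eta$ through the outer KN mean using $\eta(\mathbb{M}_{\psi}[Z])=\mathbb{M}_{\psi\circ\eta^{-1}}[\eta(Z)]$, through the maximum by monotonicity of $\eta$, and then through the inner conditional KN mean. Your explicit auxiliary identity and domain check only make precise what the paper leaves implicit.
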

\begin{proof}
\begin{align}
&\eta \left( \hat{V}_{\psi, \varphi, g}(X\mid Y) \right) \notag \\ 
&= \eta \left( \mathbb{M}_{\psi}\left[\max_{a}\mathbb{M}_{\varphi}[g(X, a)\mid Y]\right] \right) \\ 
&= \mathbb{M}_{\psi\circ \eta^{-1}}\left[\max_{a}\mathbb{M}_{\varphi\circ \eta^{-1}}[\eta\circ g(X, a)\mid Y]\right] \\ 
&= \hat{V}_{\psi\circ\eta^{-1}, \varphi\circ \eta^{-1}, \eta\circ g}(X\mid Y).
\end{align}
Similarly, 
\begin{align}
&\eta \left( V_{\varphi, \psi, g}(X\mid Y) \right) \notag \\ 
&= \eta\left(\max_{\delta}\mathbb{M}_{\varphi}[\mathbb{M}_{\psi}[g(X, \delta(Y))\mid X]] \right) \\ 
&= \max_{\delta} \eta\left(\mathbb{M}_{\varphi}[\mathbb{M}_{\psi}[g(X, \delta(Y))\mid X]] \right) \\ 
&= \max_{\delta} \mathbb{M}_{\varphi\circ \eta^{-1}}[\mathbb{M}_{\psi\circ \eta^{-1}}[\eta\circ g(X, \delta(Y))\mid X]] \\
&= V_{\varphi\circ\eta^{-1}, \psi\circ \eta^{-1}, \eta\circ g}(X\mid Y).
\end{align}
\end{proof}

\subsection{Axiomatic Approaches to Entropies and Conditional Entropies} \label{ssec:QIF}
Next, we review the axiomatic approach for entropies and conditional entropies by Alvim \textit{et al.} \cite{6266165,6957119,7536368,ALVIM201932} and Am{\' e}rico \textit{et al.} \cite{9064819,9505206}.  

\begin{definition}[$(\eta, F)$-entropy, \text{\cite[Def. 11]{9505206}}]
Given a probability distribution $p_{X}$, a continuous function $F\colon \Delta_{\mathcal{X}}\to \mathbb{R}$, and a strictly increasing and continuous function $\eta\colon F(\Delta_{\mathcal{X}})\to \mathbb{R}$, 
$\mathcal{H}(X)=\mathcal{H}(p_{X})$ is an \textit{$(\eta, F)$-entropy}, denoted by $\mathcal{H}=(\eta, F)$, if 
\begin{align}
\mathcal{H}(X) &= \eta(F(p_{X})).
\end{align}
\end{definition}

\begin{definition}[\texttt{CCV}, \text{\cite[Def. 12]{9505206}}]
An entropy $\mathcal{H}=(\eta, F)$ is said to be \textit{core-concave} (\texttt{CCV}) if the function $F$ is concave.
\end{definition}

\begin{definition}[\texttt{EAVG}, \text{\cite[Def. 13]{9505206}}]
Given a joint probability distribution $p_{X, Y}=p_{X} p_{Y\mid X}$, an entropy $\mathcal{H}=(\eta, F)$ is said to satisfy \textit{$\eta$-averaging} (\texttt{EAVG}) if the conditional entropy $\mathcal{H}(X | Y)=\mathcal{H}(p_{X}, p_{Y\mid X})$ is defined as follows: 
\begin{align}
\mathcal{H}(X\mid Y) &= \eta \left( \mathbb{E}_{Y}\left[F(p_{X\mid Y}(\cdot \mid Y))\right] \right) \\ 
&= \eta\left( \sum_{y}p_{Y}(y)F(p_{X\mid Y}(\cdot \mid y)) \right).
\end{align}
\end{definition}

\begin{definition}[\texttt{CRE}, \texttt{DPI}]
Given a joint probability distribution $p_{X, Y}=p_{X} p_{Y\mid X}$, $(\mathcal{H}(X), \mathcal{H}(X | Y))$ is said to satisfy the \texttt{CRE} if 
\begin{align}
\mathcal{H}(X\mid Y)\leq \mathcal{H}(X).
\end{align}
Furthermore, $\mathcal{H}(X | Y)$ is said to satisfy the \texttt{DPI} if, whenever $X-Y-Z$ forms a Markov chain, 
\begin{align}
\mathcal{H}(X\mid Y) &\leq \mathcal{H}(X\mid Z).
\end{align}
\end{definition}

\begin{prop}[\text{\cite[Thm. 2]{9064819}}] \label{prop:EAVG_CRE_DPI_CCV_equivalent}
If an entropy $\mathcal{H}=(\eta, F)$ satisfies \texttt{EAVG}, then the properties \texttt{CRE}, \texttt{DPI}, and \texttt{CCV} are equivalent.
\end{prop}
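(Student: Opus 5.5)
Because $\eta$ is strictly increasing, we can strip it from both sides of every inequality. \texttt{CRE} then reads $\sum_y p_Y(y)F(p_{X\mid Y}(\cdot\mid y))\le F(p_X)$, and \texttt{DPI} reads $\sum_y p_Y(y)F(p_{X\mid Y}(\cdot\mid y))\le \sum_z p_Z(z)F(p_{X\mid Z}(\cdot\mid z))$. The whole argument therefore takes place at the level of $F$. The two facts that drive it are: (i) $p_X=\sum_y p_Y(y)\,p_{X\mid Y}(\cdot\mid y)$; and (ii) any finite convex decomposition $p=\sum_i \lambda_i q_i$ of a distribution is realized by some channel. For (ii), take $\mathcal{Y}$ indexing the $q_i$, set $p_Y(i)=\lambda_i$, and define $p_{Y\mid X}(i\mid x)=\lambda_i q_i(x)/p(x)$. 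Then $p_{X\mid Y}(\cdot\mid i)=q_i$ (coordinates with $p(x)=0$ are handled trivially). I will prove the cycle \texttt{CCV} $\Rightarrow$ \texttt{DPI} $\Rightarrow$ \texttt{CRE} $\Rightarrow$ \texttt{CCV}.

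\emph{\texttt{CCV} $\Rightarrow$ \texttt{DPI}.} Let $X-Y-Z$ be a Markov chain. Then $p_{X\mid Z}(\cdot\mid z)=\sum_y p_{Y\mid Z}(y\mid z)\,p_{X\mid Y}(\cdot\mid y)$. Concavity of $F$ (Jensen over the finite mixture) gives $F(p_{X\mid Z}(\cdot\mid z))\ge \sum_y p_{Y\mid Z}(y\mid z)F(p_{X\mid Y}(\cdot\mid y))$. Averaging over $p_Z$ and using $\sum_z p_Z(z)p_{Y\mid Z}(y\mid z)=p_Y(y)$ yields the $F$-level \texttt{DPI}. Applying $\eta$ gives \texttt{DPI}.

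\emph{\texttt{DPI} $\Rightarrow$ \texttt{CRE}.} Take $Z$ constant, which is trivially a Markov chain $X-Y-Z$. Then $p_{X\mid Z}=p_X$, so $\mathcal{H}(X\mid Z)=\eta(F(p_X))=\mathcal{H}(X)$, and \texttt{DPI} reduces to \texttt{CRE}.

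\emph{\texttt{CRE} $\Rightarrow$ \texttt{CCV}.} Given $q_1,q_2\in\Delta_{\mathcal{X}}$ and $\lambda\in[0,1]$, set $p_X=\lambda q_1+(1-\lambda)q_2$ and use (ii) to build a binary $Y$ with posteriors $q_1,q_2$ and weights $\lambda,1-\lambda$. \texttt{CRE} for this $p_{X,Y}$, with $\eta$ removed, is exactly $\lambda F(q_1)+(1-\lambda)F(q_2)\le F(\lambda q_1+(1-\lambda)q_2)$, which is concavity of $F$.

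The only delicate point is this last construction. We must check that the channel is well defined when $p_X$ has zeros, and that the equivalence is understood for all joint distributions, so that arbitrary decompositions are admissible. The endpoint cases $\lambda\in\{0,1\}$ are trivial. Also, $\eta$ must be applied only to values in $F(\Delta_{\mathcal{X}})$. This is automatic, because $F(\Delta_{\mathcal{X}})$ is an interval: $F$ is continuous and $\Delta_{\mathcal{X}}$ is connected. Hence every average $\sum_y p_Y(y)F(\cdot)$ lies in its domain.
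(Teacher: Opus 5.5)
Your proof is correct. The paper does not prove this proposition itself; it imports it from Am{\'e}rico \textit{et al.}, and your cycle is the standard argument behind that result. You use Jensen on $p_{X\mid Z}(\cdot\mid z)=\sum_{y}p_{Y\mid Z}(y\mid z)\,p_{X\mid Y}(\cdot\mid y)$ for \texttt{CCV}$\Rightarrow$\texttt{DPI}, a constant $Z$ for \texttt{DPI}$\Rightarrow$\texttt{CRE} (the same device the paper uses in Theorem~\ref{thm:CRE_DPI_g-posterior}), and a binary channel realizing $\lambda q_{1}+(1-\lambda)q_{2}$ for \texttt{CRE}$\Rightarrow$\texttt{CCV}.

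You also handle the two points that are easy to miss. First, \texttt{CRE} and \texttt{DPI} must be quantified over all joint distributions for the converse to hold. Second, every average of $F$-values lies in the interval $F(\Delta_{\mathcal{X}})$, so $\eta$ is always evaluated on its domain and can be stripped by strict monotonicity.
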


\begin{remark}
The entropy measures $H(X)$, $H_{\alpha}(X)$, $S_{\alpha}(X)$, $H_{\alpha, \beta}(X)$, and $-V_{\varphi, g}(X)$ satisfy the \texttt{CCV} condition under appropriate assumptions on $\varphi$ \cite[Thm. 11]{Zarrabian:2025aa}. 
Furthermore, the conditional entropies $H_{\alpha}^{\text{A}}(X | Y)$, $H_{\alpha}^{\text{H}}(X | Y)$, $S_{\alpha}^{\text{M}}(X | Y)$, $H_{\alpha, \beta}^{\text{AKM}}(X | Y)$, and $-\hat{V}_{\psi, \varphi, g}(X|Y)$ satisfy the \texttt{EAVG} condition under appropriate conditions on $\varphi, \psi$, and $g$ (see \cite[Table 1]{e24010039} and Section \ref{sec:generalized_g_conditional_entropy}). 
In contrast, $H_{\alpha}^{\text{C}}(X | Y)$ and $-V_{\varphi, \psi, g}(X | Y)$ do not satisfy the \texttt{EAVG} condition in general. 
Nevertheless, $(H(X), H_{\alpha}^{\text{C}}(X | Y))$ and $(-V_{\varphi, g}(X), -V_{\varphi, \psi, g}(X | Y))$ satisfy the properties \texttt{CRE} and \texttt{DPI} under suitable conditions on $\varphi$, $\psi$, and $g$ (see \cite[Prop. 5]{11195284} and Section \ref{sec:generalized_g_conditional_entropy}).
\end{remark}

\begin{definition}[\texttt{EGM}, \text{\cite[Def. 15]{9505206}}]
Given a joint probability distribution $p_{X, Y}=p_{X} p_{Y\mid X}$, an entropy $\mathcal{H}=(\eta, F)$ is said to satisfy \textit{$\eta$-geometric mean} (\texttt{EGM}) if the conditional entropy $\mathcal{H}(X | Y)=\mathcal{H}(p_{X}, p_{Y\mid X})$ is defined as follows: 
\begin{align}
\mathcal{H}(X\mid Y) &:= \eta \left( \mathbb{G}\left[F(p_{X\mid Y}(\cdot \mid Y))\right] \right) \\ 
&= \eta\left( \prod_{y}F(p_{X\mid Y}(\cdot \mid y))^{p_{Y}(y)} \right).
\end{align}
\end{definition}
\begin{prop}[\text{\cite[Thm. 8]{9505206}}]
If an entropy $\mathcal{H}=(\eta, F)$ satisfies \texttt{CCV} and \texttt{EGM}, and if $F$ is non-negative, then the pair $(\mathcal{H}(X), \mathcal{H}(X|Y))$ satisfies the properties \texttt{CRE} and \texttt{DPI}.
\end{prop}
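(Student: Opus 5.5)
The plan is to derive both properties from two elementary inequalities: concavity of $F$ (\texttt{CCV}) and the weighted AM--GM inequality, which applies because $F\geq 0$. Both properties come from the same chain of inequalities. Since $\eta$ is strictly increasing, it suffices throughout to compare the arguments of $\eta$. All products over $y$ or $z$ are taken only over symbols of positive probability.

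For \texttt{CRE}, I start from $\mathcal{H}(X\mid Y)=\eta\bigl(\prod_{y}F(p_{X\mid Y}(\cdot\mid y))^{p_Y(y)}\bigr)$. First, the weighted AM--GM inequality for the nonnegative numbers $F(p_{X\mid Y}(\cdot\mid y))$ with weights $p_Y(y)$ gives $\prod_y F(p_{X\mid Y}(\cdot\mid y))^{p_Y(y)}\leq \sum_y p_Y(y)F(p_{X\mid Y}(\cdot\mid y))$. Second, concavity of $F$ together with $p_X=\sum_y p_Y(y)p_{X\mid Y}(\cdot\mid y)$ gives $\sum_y p_Y(y)F(p_{X\mid Y}(\cdot\mid y))\leq F(p_X)$. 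Applying $\eta$ to the resulting inequality yields $\mathcal{H}(X\mid Y)\leq \eta(F(p_X))=\mathcal{H}(X)$.

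For \texttt{DPI}, let $X-Y-Z$ be a Markov chain. The Markov property gives, for each $z$ with $p_Z(z)>0$,
\begin{align}
p_{X\mid Z}(\cdot\mid z)=\sum_y p_{Y\mid Z}(y\mid z)\,p_{X\mid Y}(\cdot\mid y).
\end{align}
Applying concavity of $F$ and then weighted AM--GM gives
\begin{align}
F(p_{X\mid Z}(\cdot\mid z))\geq \sum_y p_{Y\mid Z}(y\mid z)F(p_{X\mid Y}(\cdot\mid y))\geq \prod_y F(p_{X\mid Y}(\cdot\mid y))^{p_{Y\mid Z}(y\mid z)}.
\end{align}
Both sides are nonnegative, and $t\mapsto t^{p_Z(z)}$ is nondecreasing on $[0,\infty)$. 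So I raise both sides to the power $p_Z(z)$ and multiply over $z$. The exponent of each factor $F(p_{X\mid Y}(\cdot\mid y))$ on the right then becomes $\sum_z p_Z(z)p_{Y\mid Z}(y\mid z)=p_Y(y)$. This yields
\begin{align}
\prod_z F(p_{X\mid Z}(\cdot\mid z))^{p_Z(z)}\geq\prod_y F(p_{X\mid Y}(\cdot\mid y))^{p_Y(y)}.
\end{align}
Applying $\eta$ gives $\mathcal{H}(X\mid Y)\leq\mathcal{H}(X\mid Z)$.

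The only delicate point is the boundary behaviour when some $F(p_{X\mid Y}(\cdot\mid y))=0$ or some weight vanishes. I would handle it with the convention $0^0=1$ and by restricting each product to symbols of positive probability. With these conventions the AM--GM inequality and the monotonicity of $t\mapsto t^{w}$ for $w\geq 0$ remain valid on $[0,\infty)$. Nonnegativity of $F$ is exactly what makes these steps legitimate, since AM--GM and real powers $t^{w}$ are not available for negative values. Otherwise the argument is routine.
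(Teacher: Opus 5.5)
Your proof is correct and complete. The paper itself gives no proof of this proposition; it imports the result from \cite[Thm.~8]{9505206}. In your argument, weighted AM--GM is legitimate because $F\geq 0$, and combined with Jensen for the concave $F$ it gives \texttt{CRE}. In the \texttt{DPI} step, you use the mixture identity $p_{X\mid Z}(\cdot\mid z)=\sum_y p_{Y\mid Z}(y\mid z)\,p_{X\mid Y}(\cdot\mid y)$, then collect exponents via $\sum_z p_Z(z)p_{Y\mid Z}(y\mid z)=p_Y(y)$. This is exactly right, and so is the $0^0=1$ bookkeeping.

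The natural route inside this paper would instead be a reduction. \texttt{EGM} is the special case $\psi=\log$ of \texttt{EPKNAVG}. Because $\log$ is increasing and concave, $\log\circ F$ is concave whenever $F$ is concave and positive. Theorem~\ref{thm:EPKNAVG_EAVG_equivalent} then gives the \texttt{EAVG} representation $(\eta\circ\exp,\log\circ F)$ satisfying \texttt{CCV}, and Proposition~\ref{prop:EAVG_CRE_DPI_CCV_equivalent} yields \texttt{CRE} and \texttt{DPI}.

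That route is shorter and shows that, in the positive case, \texttt{EGM} is subsumed by \texttt{EAVG}. However, it requires $F>0$ on all of $\Delta_{\mathcal{X}}$, because the KN generator in \texttt{EPKNAVG} must be real-valued on an interval containing $F(\Delta_{\mathcal{X}})$. This fails for the most natural choices, such as $F=H$ or $F=S_{\alpha}$, which vanish at point masses. It also relies on the external result behind Proposition~\ref{prop:EAVG_CRE_DPI_CCV_equivalent}.

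Your multiplicative argument is really the same inequality: taking logarithms of your chain gives exactly Jensen's inequality for $\log\circ F$. But it works directly on $[0,\infty)$, covers the zero case without any limiting argument, and is fully self-contained.
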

\begin{remark}
Alvim \textit{et al.} \cite{ALVIM201932} proposed another set of axioms based on quasiconcavity (\texttt{QCV}) and minimum (\texttt{MIN}) \footnote{Because their original axiomatization was formulated for vulnerability, which is dual to entropy, they employed quasiconvexity (\texttt{QCVX}) and maximum (\texttt{MAX}) instead of quasiconcavity and minimum.}. 
Subsequently, Am{\' e}rico \textit{et al.} \cite{9505206} introduced a unifying entropy class $\mathcal{Q}$ that encompasses all the constructions described above.
\end{remark}

\section{Kolmogorov--Nagumo Mean Framework for Conditional Entropy}\label{sec:KN_framework}

In this section, we propose a new framework for conditional entropy, referred to as $(\eta, \psi)$-KN averaging (\texttt{EPKNAVG}), by extending the framework of Am{\' e}rico \textit{et al.} \cite{9064819,9505206} through the replacement of the expectation operator and geometric mean with the KN mean.

\begin{definition}[\texttt{EPKNAVG}] \label{def:EPKNAVG}
Let $I\subseteq \br$ be an interval such that $F(\Delta_{\mathcal{X}})\subseteq I$, and let $\psi\colon I\to\mathbb{R}$ be a strictly monotonic and continuous function.
An entropy $\mathcal{H}=(\eta, F)$ is said to satisfy \textit{$(\eta, \psi)$-KN averaging} (\texttt{EPKNAVG}) if the conditional entropy $\mathcal{H}(X | Y)$ is defined by: 
\begin{align}
\mathcal{H}(X \mid Y) = \eta\left( \mathbb{M}_{\psi}[F(p_{X\mid Y}(\cdot\mid Y))] \right).
\end{align}
\end{definition}

\begin{theorem} \label{thm:EPKNAVG_EAVG_equivalent}
Assume that either
\begin{itemize}
\item $\psi$ is strictly increasing and $\psi\circ F$ is concave\footnote{In utility theory, this condition is referred to as the \textit{concavification} of $F$ by $\psi$ (see, e.g., \cite{RePEc:iuk:wpaper:2012-10}).}, or
\item $\psi$ is strictly decreasing and $\psi\circ F$ is convex.
\end{itemize}
Then, an entropy $\mathcal{H}$ satisfies \texttt{EPKNAVG} for some $(\eta,F,\psi)$ if and only if it satisfies \texttt{EAVG} for some $(\tilde{\eta},\tilde{F})$.
\end{theorem}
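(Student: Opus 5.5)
The plan is to reparametrize, exploiting the fact that the KN mean $\mathbb{M}_{\psi}$ is simply $\psi^{-1}$ applied to an ordinary expectation of $\psi$-transformed values. The whole argument is a change of variables in $(\eta,F)$.

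\emph{Direction} \texttt{EPKNAVG} $\Rightarrow$ \texttt{EAVG}. Suppose $\mathcal{H}(X)=\eta(F(p_X))$ and $\mathcal{H}(X\mid Y)=\eta(\mathbb{M}_{\psi}[F(p_{X\mid Y}(\cdot\mid Y))])$. Unfolding the definition of the KN mean gives
\begin{align}
\mathcal{H}(X\mid Y)=\eta\circ\psi^{-1}\left(\mathbb{E}_{Y}\left[\psi\circ F(p_{X\mid Y}(\cdot\mid Y))\right]\right).
\end{align}
In the increasing case I set $\tilde F:=\psi\circ F$ and $\tilde\eta:=\eta\circ\psi^{-1}$. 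In the decreasing case I set $\tilde F:=-\psi\circ F$ and $\tilde\eta(t):=\eta(\psi^{-1}(-t))$.

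I then check the required properties in each case.
\begin{itemize}
\item $\tilde F$ is continuous on $\Delta_{\mathcal{X}}$.
\item $\tilde\eta$ is continuous and strictly increasing on $\tilde F(\Delta_{\mathcal{X}})$. In the decreasing case this holds because $\psi^{-1}$ is strictly decreasing, so composing with $t\mapsto -t$ restores monotonicity.
\item $\tilde\eta(\tilde F(p))=\eta(F(p))$, so the unconditional entropy is unchanged.
\end{itemize}
One subtlety is that $\tilde\eta$ must be defined on a set containing the averages $\mathbb{E}_Y[\tilde F(\cdot)]$, not only on $\tilde F(\Delta_{\mathcal{X}})$. The fact that $\psi\circ F$ is concave (resp.\ convex), equivalently that $\tilde F$ is concave, guarantees this: the average lies in the convex hull $[\min\tilde F,\max\tilde F]$. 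Because $\Delta_{\mathcal{X}}$ is connected and $\tilde F$ is continuous, $\tilde F(\Delta_{\mathcal{X}})$ is already this interval. Concavity also makes the resulting \texttt{EAVG} entropy \texttt{CCV}, which is the form in which the hypothesis will actually be used.

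\emph{Direction} \texttt{EAVG} $\Rightarrow$ \texttt{EPKNAVG}. This is the trivial embedding. Take $\psi=\mathrm{id}$, which is strictly increasing and reduces $\mathbb{M}_{\psi}$ to $\mathbb{E}$, and keep $(\eta,F)=(\tilde\eta,\tilde F)$. The only obstacle is interpretive. The hypothesis requires $\psi\circ F=\tilde F$ to be concave, which holds exactly when the given \texttt{EAVG} entropy is \texttt{CCV}. If the theorem is meant to apply to arbitrary \texttt{EAVG} entropies, I would instead read the theorem as an equivalence of classes under the stated hypothesis. In that reading, the forward construction shows that every \texttt{EPKNAVG} entropy meeting the concavification condition is a (core-concave) \texttt{EAVG} entropy, and conversely every \texttt{EAVG} entropy arises with $\psi=\mathrm{id}$.

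I expect the main obstacle to be clarifying this quantifier issue, rather than any computation. All the computations are the one-line identity $\eta\circ\mathbb{M}_\psi=(\eta\circ\psi^{-1})\circ\mathbb{E}\circ\psi$ together with monotonicity bookkeeping for the sign flip in the decreasing case.
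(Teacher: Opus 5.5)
Your proposal is correct and follows essentially the same route as the paper. The paper uses the trivial embedding with $\psi=\mathrm{id}$ for \texttt{EAVG}$\Rightarrow$\texttt{EPKNAVG}, and the reparametrizations $(\eta\circ\psi^{-1},\psi\circ F)$, resp.\ $(\eta\circ\psi^{-1}(-t),-\psi\circ F)$, for the converse; your domain check really rests on continuity of $\tilde F$ on the connected simplex rather than on concavity, and your quantifier remark (imposing the hypothesis in the $\psi=\mathrm{id}$ direction would force $\tilde F$ to be concave) is a fair point that the paper's proof passes over silently.
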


\begin{proof}
See Appendix \ref{proof:EPKNAVG_EAVG_equivalent}. 
\end{proof}

\begin{remark}
Theorem \ref{thm:EPKNAVG_EAVG_equivalent} shows that, under the stated assumptions, replacing the expectation operator in the \texttt{EAVG} framework with the KN mean does not enlarge the class of representable conditional entropies.
\end{remark}

\section{Generalized $g$-Conditional Entropies} \label{sec:generalized_g_conditional_entropy}

In this section, we propose a new framework for conditional entropy based on the generalized $g$-vulnerability introduced in Definition \ref{def:generalized_vulnerability}. 
We show that the Augustin--Csisz{\' a}r conditional entropy defined in Eq.~\eqref{eq:AC_cond_renyi_ent} belongs to this framework, whereas Proposition \ref{prop:AC_conditional_entropy_EAVG}
shows that an \texttt{EAVG} representation does not exist in general.
Furthermore, we derive sufficient conditions under which the entropy measures defined within the proposed framework satisfy the properties \texttt {CRE} and \texttt{DPI}.

\begin{definition}[($\varphi, g$)-entropy]
Let $I\subseteq\mathbb{R}$ be an interval, $\varphi\colon I\to\mathbb{R}$ be a strictly monotonic and continuous function,
and $g(x, a)$ be a gain function.
Given a probability distribution $p_{X}$, the \textit{$(\varphi, g)$-entropy} is defined by: 
\begin{align}
H_{\varphi, g}(X) &= -V_{\varphi, g}(X).
\end{align}
\end{definition}

\begin{definition}[Generalized $g$-conditional entropies]
Let $I\subseteq\mathbb{R}$ be an interval, $\varphi,\psi\colon I\to\mathbb{R}$ be strictly monotonic and continuous functions, 
and $g(x, a)$ be a gain function.
Given a joint probability distribution $p_{X, Y}=p_{X}p_{Y\mid X}$, the \textit{$(\psi, \varphi, g)$-posterior conditional entropy} and the \textit{$(\varphi, \psi, g)$-Bayes conditional entropy} are respectively defined as follows: 
\begin{align}
\hat{H}_{\psi, \varphi, g}(X|Y) &:= -\hat{V}_{\psi, \varphi, g}(X|Y), \\
H_{\varphi, \psi, g}(X|Y) &:= -V_{\varphi, \psi, g}(X|Y).
\end{align}
\end{definition}

\begin{remark}
By Remark \ref{rem:generalized_vulnerability}, $\hat{H}_{\psi, \varphi, g}(X | Y)$ and $H_{\varphi, \psi, g}(X | Y)$ do not generally coincide, although they agree when $\varphi=\psi$.
\end{remark}

\begin{eg}
Let $\mathcal{A}=\Delta_{\mathcal{X}}$ and $g_{\text{$0$-$1$}}$ denote the soft $0$-$1$ score defined by $g_{\text{$0$-$1$}}(x, r) := r(x)$. 
By \cite[Prop. 2]{Zarrabian:2025aa}, \cite[Thm. 1]{11654062}, and Proposition \ref{prop:g-vulnerability_basic_propery}, 
Shannon entropy and several conditional entropies, including R{\'e}nyi-type ones, can be represented within the framework of generalized $g$-conditional entropies as follows:
\begin{align}
H(X) &= -\log V_{\log t,  g_{\text{0-1}}}(X) \\ 
&=-V_{t, \log g_{\text{$0$-$1$}}}(X), \\ 
H(X|Y) &= -\log V_{\log t, \log t, g_{\text{0-1}}}(X|Y) \\ 
&=-\hat{V}_{t, t, \log g_{\text{$0$-$1$}}}(X|Y), \\ 
&=-V_{t, t, \log g_{\text{$0$-$1$}}}(X|Y), \\ 
H_{\alpha}^{\text{A}}(X|Y) &= -\log V_{\ln_{\frac{1}{\alpha}}t, \ln_{\frac{1}{\alpha}}t, g_{\text{$0$-$1$}}}(X|Y) \\
&= -\hat{V}_{\ln_{\frac{1}{\alpha}}\circ \exp\{t\}, \ln_{\frac{1}{\alpha}}\circ \exp\{t\}, \log g_{\text{$0$-$1$}}}(X|Y) \\ 
&= -V_{\ln_{\frac{1}{\alpha}}\circ \exp\{t\}, \ln_{\frac{1}{\alpha}}\circ \exp\{t\}, \log g_{\text{$0$-$1$}}}(X|Y), \\ 
H_{\alpha}^{\text{C}}(X|Y)&= -\log V_{\log t, \ln_{\frac{1}{\alpha}}t, g_{\text{$0$-$1$}}}(X|Y) \\
&= -V_{t, \ln_{\frac{1}{\alpha}}\circ \exp\{t\}, \log g_{\text{$0$-$1$}}}(X|Y).
\end{align}
\end{eg}

Next, we establish an impossibility result that there exists an $\alpha$ and $p_{X, Y}$ for which the Augustin--Csisz{\' a}r conditional entropy $H_{\alpha}^{\text{C}}(X | Y)$ cannot be represented by any $(\eta, F)$-entropy satisfying the \texttt{EAVG} condition. 

\begin{prop} \label{prop:AC_conditional_entropy_EAVG}
There exists an $\alpha\in (0, 1)\cup (1, \infty)$ and a joint probability distribution $p_{X, Y}$ such that $H_{\alpha}^{\text{C}}(X|Y)$ cannot be represented by any $(\eta, F)$-entropy satisfying \texttt{EAVG}.
\end{prop}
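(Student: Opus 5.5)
The plan is to use the unconditional case to pin down $F$ in terms of $\eta$, and then to find a $p_{X,Y}$ on which every $\eta$-average is forced to take a value that $H_{\alpha}^{\text{C}}(X|Y)$ does not take. Throughout, "represented by an $(\eta,F)$-entropy satisfying \texttt{EAVG}" is read in the natural sense: one fixed pair $(\eta,F)$ is used for all joint distributions. The entropy $\mathcal{H}(X)=\eta(F(p_X))$ is the unconditional version of $H_{\alpha}^{\text{C}}$, i.e. its value for a trivial $Y$. The conditional entropy is then $\mathcal{H}(X|Y)=\eta\bigl(\sum_y p_Y(y)F(p_{X|Y}(\cdot|y))\bigr)$.

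\textbf{Step 1: identify $F$.} Take $|\mathcal{Y}|=1$ in \eqref{eq:AC_cond_renyi_ent}. The minimisation is over a single distribution $r$ on $\mathcal{X}$, and the objective becomes
\begin{align}
\frac{\alpha}{1-\alpha}\Bigl(1-\frac{1}{\alpha}\Bigr)\sum_x p_X(x)\log r(x)=-\sum_x p_X(x)\log r(x).
\end{align}
By Gibbs' inequality its minimum is $H(X)$. So the unconditional counterpart of $H_{\alpha}^{\text{C}}$ is Shannon entropy, and \texttt{EAVG} with trivial $Y$ forces $\eta(F(p))=H(p)$ for every $p\in\Delta_{\mathcal{X}}$. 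Since $\eta$ is strictly increasing and continuous, $F=\eta^{-1}\circ H$.

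\textbf{Step 2: the forced value of an \texttt{EAVG} representation.} Substituting $F=\eta^{-1}\circ H$ gives
\begin{align}
\mathcal{H}(X|Y)=\eta\Bigl(\sum_y p_Y(y)\,\eta^{-1}\bigl(H(p_{X|Y}(\cdot|y))\bigr)\Bigr),
\end{align}
which is a Kolmogorov--Nagumo mean of the posterior Shannon entropies. Any such mean lies between $\min_y H(p_{X|Y}(\cdot|y))$ and $\max_y H(p_{X|Y}(\cdot|y))$. In particular, if all posteriors have the same Shannon entropy $h$, then $\mathcal{H}(X|Y)=h$ regardless of $\eta$.

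\textbf{Step 3: the counterexample.} Take $\mathcal{X}=\mathcal{Y}=\{0,1\}$, $p_X$ uniform, and $p_{Y|X}$ a binary symmetric channel with $p_{Y|X}(y|x)=a$ if $y=x$ and $1-a$ otherwise, with $a\in(0,1)$, $a\neq 1/2$. Both posteriors are permutations of $(a,1-a)$, so Step 2 forces the value $h(a):=-a\log a-(1-a)\log(1-a)$.

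It remains to compute $H_{\alpha}^{\text{C}}(X|Y)$ for this $p_{X,Y}$.
\begin{itemize}
\item By the $x\leftrightarrow y$ swap symmetry, together with convexity of the objective in $r$ after the sign $\frac{\alpha}{1-\alpha}$ is accounted for (a log-sum of powers $r^{1-1/\alpha}$; this is the step to check carefully), the optimum can be taken symmetric: $r(\cdot|0)=(b,1-b)$ and $r(\cdot|1)=(1-b,b)$.
\item With this choice both $x$-terms are equal, and the problem reduces to
\begin{align}
\min_{b}\frac{\alpha}{1-\alpha}\log\bigl(a\,b^{1-1/\alpha}+(1-a)(1-b)^{1-1/\alpha}\bigr).
\end{align}
\item By Hölder's inequality (or a Lagrange condition giving $b\propto a^{\alpha}$), this equals $\frac{\alpha}{1-\alpha}\log\bigl(a^{\alpha}+(1-a)^{\alpha}\bigr)^{1/\alpha}=H_{\alpha}(a,1-a)$.
\end{itemize}
Since $H_{\alpha}(a,1-a)\neq h(a)$ whenever $\alpha\neq 1$ and $a\neq 1/2$ (Rényi entropy is strictly decreasing in $\alpha$ for non-uniform distributions), a concrete choice such as $\alpha=2$, $a=1/4$ contradicts Step 2. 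This proves the proposition.

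\textbf{Expected obstacle.} The main risk is Step 3's reduction to symmetric $r$. If the convexity argument is awkward, I would fall back on fixing $\alpha=2$, $a=1/4$ and minimising the two-variable function directly. A bound alone suffices: for $\alpha>1$ the prefactor $\frac{\alpha}{1-\alpha}$ is negative, and an explicit $r$ showing $H_{\alpha}^{\text{C}}(X|Y)<h(a)$ already gives the contradiction. The other point to state explicitly is the interpretation of "represented" in Step 1: $\eta$ and $F$ must be the same functions used for the unconditional entropy, since otherwise the statement would be trivial for a single $p_{X,Y}$.
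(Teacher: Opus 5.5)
Your proposal is correct and follows essentially the same route as the paper. The paper also forces $\eta(F(p_X))=H(p_X)$ from the independent-$Y$ case, and it uses the joint distribution $p_{X,Y}(x,i)=\tfrac12 p_X^{(i)}(x)$, which is exactly your uniform-input binary symmetric channel with $a=0.9$. Both posteriors have the same Shannon entropy, and with $\alpha=2$ an explicit choice of $r_{X\mid Y}$ gives the contradiction. The only difference is that the paper takes the cruder choice of $r_{X\mid Y}(\cdot\mid y)$ concentrated on $x=y$, giving $H_{2}^{\text{C}}(X|Y)\le -2\log 0.9<H(0.9,0.1)$. So, as you anticipated, neither the symmetry/convexity reduction nor the exact value $H_\alpha(a,1-a)$ is needed.
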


\begin{proof}
See Appendix \ref{proof:AC_conditional_entropy_EAVG}. 
\end{proof}

\begin{remark}
Although Proposition \ref{prop:AC_conditional_entropy_EAVG} shows that $H_{\alpha}^{\text{C}}(X|Y)$ cannot be represented within the \texttt{EAVG} framework for some $\alpha$ and $p_{X,Y}$, it is known to satisfy \texttt{CRE} and \texttt{DPI} \cite[Prop. 5]{11195284}.
\end{remark}

In the following, we derive sufficient conditions for generalized $g$-conditional entropies. 

\begin{lemma} \label{lem:generalized_g_vulnarability_property}
Given a joint probability distribution
$p_{X,Y}=p_{X}p_{Y\mid X}$, let $\eta(t):=t$ and define $F(p_{X}):=-V_{\varphi,g}(X)$.
Furthermore, define the reflected KN generator $\bar{\psi}(t):=\psi(-t)$.
If $\varphi$ is strictly concave, then the following hold:
\begin{enumerate}
\item
The $(\varphi,g)$-entropy $H_{\varphi,g}(X)$ is an
$(\eta,F)$-entropy satisfying \texttt{CCV}.
\item
The $(\psi,\varphi,g)$-posterior conditional entropy
$\hat{H}_{\psi,\varphi,g}(X|Y)$ satisfies
\texttt{EPKNAVG} with respect to the $(\eta,F)$-entropy
and the KN generator $\bar{\psi}$.
\end{enumerate}
\end{lemma}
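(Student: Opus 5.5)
Both parts come down to unwinding the definitions; I take them in order.

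\textbf{Part 1: \texttt{CCV}.}
With $\eta(t)=t$ and $F(p_X)=-V_{\varphi,g}(X)$ we have $H_{\varphi,g}(X)=\eta(F(p_X))$ trivially. The substance is concavity of $F$, i.e.\ convexity of $p_X\mapsto V_{\varphi,g}(X)=\max_a \varphi^{-1}\bigl(\sum_x p_X(x)\varphi(g(x,a))\bigr)$.

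Fix $a$ and set $\ell_a(p):=\sum_x p(x)\varphi(g(x,a))$, which is linear in $p$. Then $V_{\varphi,g}(X)=\max_a \varphi^{-1}(\ell_a(p_X))$. A pointwise maximum of convex functions is convex, so it suffices that each $p\mapsto\varphi^{-1}(\ell_a(p))$ is convex.

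\begin{itemize}
\item If $\varphi$ is strictly concave and increasing, then $\varphi^{-1}$ is convex and increasing. Its composition with a linear map is convex.
\item If $\varphi$ is strictly concave and decreasing, then $\varphi^{-1}$ is again convex (now decreasing). Its composition with the linear map $\ell_a$ is still convex, since composing with an affine map preserves convexity regardless of monotonicity.
\end{itemize}

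To justify convexity of $\varphi^{-1}$ in general: for strictly monotone concave $\varphi$, the graph of $\varphi^{-1}$ is the reflection of the graph of $\varphi$ across the diagonal. I would verify the two monotone cases directly with the two-point inequality. Take $u=\varphi(s)$, $v=\varphi(t)$; concavity gives $\varphi(\lambda s+(1-\lambda)t)\ge\lambda u+(1-\lambda)v$. Apply $\varphi^{-1}$:
\begin{itemize}
\item when $\varphi$ is increasing, $\varphi^{-1}$ preserves the inequality, giving $\lambda s+(1-\lambda)t\ge\varphi^{-1}(\lambda u+(1-\lambda)v)$, which is convexity;
\item when $\varphi$ is decreasing, I must check the direction of the inequality carefully. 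This sign bookkeeping is the one step I expect to need care.
\end{itemize}
Continuity of $F$ follows from continuity of $\varphi$, $\varphi^{-1}$ and a finite maximum (treating $\mathcal{A}$ as finite or compact, as implicitly assumed).

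\textbf{Part 2: \texttt{EPKNAVG}.}
For each $y$, the inner quantity $\max_a\mathbb{M}_{\varphi}[g(X,a)\mid Y=y]$ equals $V_{\varphi,g}$ evaluated at $p_{X|Y}(\cdot|y)$, which is $-F(p_{X|Y}(\cdot|y))$. Hence
\[
\hat{H}_{\psi,\varphi,g}(X|Y)=-\psi^{-1}\Bigl(\textstyle\sum_y p_Y(y)\,\psi\bigl(-F(p_{X|Y}(\cdot|y))\bigr)\Bigr).
\]
Since $\bar\psi(t)=\psi(-t)$, we have $\bar\psi^{-1}(s)=-\psi^{-1}(s)$. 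The right-hand side is therefore $\bar\psi^{-1}\bigl(\mathbb{E}_Y[\bar\psi(F(p_{X|Y}(\cdot|Y)))]\bigr)=\eta(\mathbb{M}_{\bar\psi}[F(p_{X|Y}(\cdot|Y))])$, which is the \texttt{EPKNAVG} form.

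It remains to check the domain requirement of Definition~\ref{def:EPKNAVG}. We need $F(\Delta_{\mathcal{X}})\subseteq -I$, where $\bar\psi$ is defined on $-I$. This holds because the values of $V_{\varphi,g}$ are KN means of gains lying in $I$, and such means stay in $I$. Also, $\bar\psi$ is strictly monotone and continuous, as required.
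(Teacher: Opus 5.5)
\textbf{Where your proof matches the paper.} Your Part 2 is the paper's argument: the identity $\bar\psi^{-1}=-\psi^{-1}$ turns $-\psi^{-1}\bigl(\mathbb{E}_Y[\psi(-F)]\bigr)$ into $\mathbb{M}_{\bar\psi}[F]$. Your domain check $F(\Delta_{\mathcal{X}})\subseteq -I$ is a useful addition. For Part 1 the paper only cites \cite[Thm. 11]{Zarrabian:2025aa}. Your direct argument (a pointwise maximum of convex maps $p\mapsto\varphi^{-1}(\ell_a(p))$) is correct when $\varphi$ is increasing.

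\textbf{The gap: the decreasing case.} The step you flagged as ``sign bookkeeping'' fails, and it cannot be repaired. A strictly decreasing concave function has a \emph{concave} inverse, not a convex one. Finishing your own two-point computation shows this. Applying the decreasing map $\varphi^{-1}$ to $\varphi(\lambda s+(1-\lambda)t)\ge\lambda u+(1-\lambda)v$ reverses the inequality and gives $\lambda\varphi^{-1}(u)+(1-\lambda)\varphi^{-1}(v)\le\varphi^{-1}(\lambda u+(1-\lambda)v)$. So each $p\mapsto\varphi^{-1}(\ell_a(p))$ is concave, and a maximum of concave functions need not be convex.

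\textbf{A counterexample.} The conclusion itself fails in this case. Take $\varphi(t)=-t^{2}$ on $I=(0,\infty)$. It is strictly decreasing and strictly concave, and $\mathbb{M}_{\varphi}[Z]=\sqrt{\mathbb{E}[Z^{2}]}$. Let $\mathcal{X}=\{0,1\}$ and take a gain independent of the action, with $g(0,a)=1$ and $g(1,a)=2$. Then $V_{\varphi,g}(X)=\sqrt{1+3p_X(1)}$, which is strictly concave in $p_X$. Hence $F=-V_{\varphi,g}$ is strictly convex and \texttt{CCV} fails.

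\textbf{Root cause and fix.} The root cause is $\mathbb{M}_{\varphi}=\mathbb{M}_{-\varphi}$, so ``$\varphi$ concave'' means nothing until the monotonicity direction is fixed. What your argument actually uses is convexity of $\varphi^{-1}$. That holds exactly when $\varphi$ is increasing and concave, or decreasing and convex; these two cases generate the same means. You should therefore state Part 1 for increasing strictly concave $\varphi$ and prove it with your first bullet. Delete the second bullet rather than try to fix its signs. Every generator in the paper's examples ($\log t$, $\ln_{1/\alpha}t$, $\ln_{1/\alpha}\circ\exp$, and so on) is increasing. So this restriction is presumably the intended reading of the lemma and loses nothing downstream.
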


\begin{proof}
$1)$ follows immediately from \cite[Thm. 11]{Zarrabian:2025aa}. 
For $2)$, since $\bar{\psi}^{-1}(t)=-\psi^{-1}(t)$, we obtain
$\mathbb{M}_{\bar{\psi}}\left[F(p_{X\mid Y}(\cdot\mid Y))\right]=\hat{H}_{\psi,\varphi,g}(X|Y)$.
Therefore,
$\hat{H}_{\psi,\varphi,g}(X|Y)$ satisfies
\texttt{EPKNAVG} with KN generator $\bar{\psi}$.
\end{proof}

\begin{theorem} \label{thm:CRE_DPI_g-posterior}
Given a joint probability distribution
$p_{X,Y}=p_{X}p_{Y\mid X}$, let $F(p_{X}):=-V_{\varphi,g}(X),\bar{\psi}(t):=\psi(-t)$.
Assume that $\varphi$ is strictly concave and that either
\begin{itemize}
\item
$\bar{\psi}$ is strictly increasing and
$\bar{\psi}\circ F$ is concave, or
\item
$\bar{\psi}$ is strictly decreasing and
$\bar{\psi}\circ F$ is convex.
\end{itemize}
Then,
$\hat{H}_{\psi,\varphi,g}(X|Y)$ satisfies \texttt{DPI}.
Consequently,
$(H_{\varphi,g}(X),
\hat{H}_{\psi,\varphi,g}(X|Y))$
satisfies \texttt{CRE}.
\end{theorem}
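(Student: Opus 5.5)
The plan is to reduce the statement to the EAVG setting and then invoke Proposition \ref{prop:EAVG_CRE_DPI_CCV_equivalent}. First, by Lemma \ref{lem:generalized_g_vulnarability_property}, since $\varphi$ is strictly concave, $F(p_X)=-V_{\varphi,g}(X)$ is concave and $\hat{H}_{\psi,\varphi,g}(X|Y)=\mathbb{M}_{\bar\psi}[F(p_{X|Y}(\cdot|Y))]$. Thus $\hat{H}_{\psi,\varphi,g}$ satisfies EPKNAVG with $\eta(t)=t$ and KN generator $\bar\psi$. The stated hypotheses on $\bar\psi$ are exactly those of Theorem \ref{thm:EPKNAVG_EAVG_equivalent}.

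Rather than only citing the equivalence abstractly, I would construct the EAVG pair explicitly, since CCV must be checked for the new core. In the increasing case, set $\tilde F:=\bar\psi\circ F$ and $\tilde\eta:=\bar\psi^{-1}$. Then $\tilde\eta$ is strictly increasing and continuous on $\tilde F(\Delta_{\mathcal X})$, and
\[
\hat{H}_{\psi,\varphi,g}(X|Y)=\tilde\eta\bigl(\mathbb{E}_Y[\tilde F(p_{X|Y}(\cdot|Y))]\bigr).
\]
The unconditional entropy is $\tilde\eta(\tilde F(p_X))=F(p_X)=H_{\varphi,g}(X)$, so the pair $(H_{\varphi,g}(X),\hat H_{\psi,\varphi,g}(X|Y))$ is generated by the single entropy $\mathcal H=(\tilde\eta,\tilde F)$ satisfying EAVG. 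In the decreasing case, use $\tilde F:=-\bar\psi\circ F$ and $\tilde\eta(t):=\bar\psi^{-1}(-t)$. Again $\tilde\eta$ is strictly increasing, the mean identity holds because $\mathbb{E}$ commutes with the sign flip, and $\tilde F$ is concave because $\bar\psi\circ F$ is convex. In both cases $\tilde F$ is concave, so $(\tilde\eta,\tilde F)$ satisfies CCV. Continuity of $\tilde F$ follows from continuity of $F$, a maximum of finitely many continuous functions when the action set is finite, or from concavity otherwise, together with continuity of $\bar\psi$.

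Proposition \ref{prop:EAVG_CRE_DPI_CCV_equivalent} then gives DPI for $\hat H_{\psi,\varphi,g}(X|Y)$ and CRE for the pair. CRE also follows directly from DPI: take $Z$ constant, which makes $X-Y-Z$ a Markov chain, and note $\hat H_{\psi,\varphi,g}(X|Z)=F(p_X)=H_{\varphi,g}(X)$ because a KN mean of a point mass is that point.

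The main obstacle is bookkeeping rather than a deep step. I must make sure the sign reflection $\bar\psi$, the negation in $F=-V$, and the monotonicity direction combine so that $\tilde\eta$ is increasing and $\tilde F$ is concave in each case. I also need the domain condition $F(\Delta_{\mathcal X})\subseteq I$ for $\bar\psi$, i.e., that $-F(\Delta_{\mathcal{X}})$ lies in the domain of $\psi$, which I would take as implicit in the definitions.
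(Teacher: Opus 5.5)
Your proposal is correct and follows the paper's proof. Lemma~\ref{lem:generalized_g_vulnarability_property} gives \texttt{EPKNAVG} with generator $\bar\psi$; the concavification step of Theorem~\ref{thm:EPKNAVG_EAVG_equivalent}, with exactly your $(\tilde\eta,\tilde F)=(\bar\psi^{-1},\bar\psi\circ F)$ or $(\bar\psi^{-1}(-t),-\bar\psi\circ F)$, gives an \texttt{EAVG} representation satisfying \texttt{CCV}; Proposition~\ref{prop:EAVG_CRE_DPI_CCV_equivalent} then gives \texttt{DPI}, and a constant $Z$ gives \texttt{CRE}. Your explicit check that $\tilde\eta(\tilde F(p_X))=H_{\varphi,g}(X)$, which ensures \texttt{CRE} is stated for the right unconditional entropy, is a detail the paper leaves implicit and is worth keeping.
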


\begin{proof}
By Lemma \ref{lem:generalized_g_vulnarability_property},
$H_{\varphi,g}(X)$ is an $(\eta,F)$-entropy satisfying
\texttt{CCV}, and
$\hat{H}_{\psi,\varphi,g}(X|Y)$ satisfies
\texttt{EPKNAVG} with KN generator $\bar{\psi}$.
Under the stated assumptions,
Theorem \ref{thm:EPKNAVG_EAVG_equivalent}
yields an equivalent \texttt{EAVG} representation satisfying
\texttt{CCV}.
Hence, Proposition \ref{prop:EAVG_CRE_DPI_CCV_equivalent} implies \texttt{DPI}. The \texttt{CRE} property follows as a special case of \texttt{DPI} by taking the downstream variable $Z$ to be constant.
\end{proof}

\begin{remark}
Zarrabian and Sadeghi \cite[Prop. 8]{Zarrabian:2025aa} established a related sufficient condition for \texttt{DPI}.
\end{remark}

\begin{theorem} \label{thm:CRE_DPI_g-Bayes}
Given a joint probability distribution $p_{X,Y}=p_{X}p_{Y\mid X}$, the following holds:
\begin{itemize}
\item $(H_{\varphi,g}(X),H_{\varphi,\psi,g}(X|Y))$ satisfies \texttt{CRE}.
\item Suppose that $\mathcal{A}$ is a convex set and that either
\begin{itemize}
\item $\psi$ is strictly increasing and $\psi\circ g$ is concave in $a$, or
\item $\psi$ is strictly decreasing and $\psi\circ g$ is convex in $a$.
\end{itemize}
Then, $H_{\varphi,\psi,g}(X|Y)$ satisfies \texttt{DPI}.
\end{itemize}
\end{theorem}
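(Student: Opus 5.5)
For \texttt{CRE}, I would compare vulnerabilities: it suffices to show $V_{\varphi,\psi,g}(X|Y)\ge V_{\varphi,g}(X)$. Let $a^{*}$ attain the maximum in Eq.~\eqref{eq:generalized_g_prior}; if it is only a supremum, take near-maximizers and pass to the limit. Consider the constant decision rule $\delta(y)\equiv a^{*}$. For each $x$, the inner conditional KN mean is $\mathbb{M}_{\psi}[g(X,a^{*})\mid X=x]=\psi^{-1}(\sum_{y}p_{Y|X}(y|x)\psi(g(x,a^{*})))=g(x,a^{*})$, because a KN mean of a constant returns that constant. The outer mean is then $\mathbb{M}_{\varphi}[g(X,a^{*})]=V_{\varphi,g}(X)$. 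Since the maximum over $\delta$ is at least the value at this particular $\delta$, the inequality follows, and negating gives \texttt{CRE}. No assumption on $\varphi,\psi,g$ beyond well-definedness is needed.

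For \texttt{DPI}, let $X-Y-Z$ be a Markov chain and let $\delta_{Z}$ be any rule $\mathcal{Z}\to\mathcal{A}$. I would build a rule on $\mathcal{Y}$ by randomizing through the channel and then averaging the actions, which is possible because $\mathcal{A}$ is convex:
\[
\delta_{Y}(y):=\sum_{z}p_{Z|Y}(z|y)\,\delta_{Z}(z)\in\mathcal{A}.
\]
Take the case where $\psi$ is strictly increasing and $\psi\circ g(x,\cdot)$ is concave. For each $x$, Jensen's inequality in $a$ gives $\psi(g(x,\delta_{Y}(y)))\ge\sum_{z}p_{Z|Y}(z|y)\psi(g(x,\delta_{Z}(z)))$. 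Averaging over $y\sim p_{Y|X}(\cdot|x)$ and using the Markov property $p_{Z|X}(z|x)=\sum_{y}p_{Y|X}(y|x)p_{Z|Y}(z|y)$ yields
\[
\mathbb{E}[\psi(g(x,\delta_{Y}(Y)))\mid X=x]\ge\mathbb{E}[\psi(g(x,\delta_{Z}(Z)))\mid X=x].
\]
Applying the increasing $\psi^{-1}$ gives $\mathbb{M}_{\psi}[g(X,\delta_{Y}(Y))\mid X=x]\ge\mathbb{M}_{\psi}[g(X,\delta_{Z}(Z))\mid X=x]$ for every $x$. The decreasing/convex case is symmetric: Jensen's inequality reverses, and applying the decreasing $\psi^{-1}$ reverses it back, so the same pointwise inequality holds.

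Next, the outer KN mean $\mathbb{M}_{\varphi}$ is monotone in its argument for either monotonicity direction of $\varphi$, since $\varphi$ and $\varphi^{-1}$ share the same direction. Hence the objective value at $\delta_{Y}$ is at least the value at $\delta_{Z}$. Maximizing over $\delta_{Z}$ gives $V_{\varphi,\psi,g}(X|Y)\ge V_{\varphi,\psi,g}(X|Z)$, which is \texttt{DPI} after negation.

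The main point needing care is the joint use of the Markov property and Jensen's inequality: the averaging kernel $p_{Z|Y}$ must not depend on $x$, and this is exactly where the Markov chain is used. One should also check that $g(x,\delta_{Y}(y))$ remains in the domain $I$ of $\psi$, for instance by assuming $g(x,\cdot)$ takes values in $I$ on all of $\mathcal{A}$.
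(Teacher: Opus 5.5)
Your proposal is correct and follows the paper's proof essentially step for step. For \texttt{CRE} you use the constant decision rule, and for \texttt{DPI} you use the averaged rule $\delta_Y(y)=\sum_z p_{Z|Y}(z|y)\,\delta_Z(z)$ (the paper's $\mathbb{E}_Z[\delta(Z)\mid Y=y]$) together with the Markov factorization, Jensen's inequality, and monotonicity of $\mathbb{M}_{\varphi}$; your explicit handling of the decreasing/convex case and of the domain condition on $g(x,\cdot)$ spells out details the paper leaves implicit.
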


\begin{proof}
See Appendix \ref{proof:CRE_DPI_g-Bayes}.
\end{proof}

\begin{remark}
The \texttt{DPI} property implies \texttt{CRE} by taking the downstream variable $Z$ to be constant. 
In Theorem \ref{thm:CRE_DPI_g-Bayes}, however, \texttt{CRE} holds under weaker assumptions than those required for \texttt{DPI}.
In particular, the convexity of $\mathcal{A}$ and the concavity or convexity condition on $\psi\circ g$ are required only for \texttt{DPI}.
\end{remark}

\section{Conclusion}\label{sec:conclusion}

In this study, we investigated conditional entropy frameworks that are based on the KN mean.
First, we introduced the \texttt{EPKNAVG} framework as a KN-mean extension of the \texttt{EAVG} framework.
Under suitable conditions, we proved that \texttt{EPKNAVG} is equivalent to \texttt{EAVG}.
Next, we proposed a new framework of generalized $g$-conditional entropies based on generalized $g$-vulnerability.
This framework encompasses conditional entropies that lie beyond the scope of \texttt{EAVG}-type representations.
In particular, we showed that there exists an $\alpha$ and a joint probability distribution $p_{X, Y}$ such that the Augustin--Csisz{\' a}r conditional entropy $H_{\alpha}^{\mathrm{C}}(X|Y)$ cannot be represented by any $(\eta,F)$-entropy satisfying $\texttt{EAVG}$, whereas it admits a natural representation within the proposed framework.
Finally, we derived sufficient conditions under which the proposed generalized $g$-conditional entropies satisfy the properties \texttt{CRE} and \texttt{DPI}.

An important direction for future work is to characterize the relationship between generalized $g$-conditional entropies and the full class $\mathcal{Q}$, and to investigate further axiomatic properties of KN-mean-based conditional entropy frameworks. Another direction is to identify operational problems in which the generalized $g$-Bayes construction arises naturally.

\appendices

\section{Proof of Theorem \ref{thm:EPKNAVG_EAVG_equivalent}} \label{proof:EPKNAVG_EAVG_equivalent}
\begin{proof}
$(\texttt{EAVG} \Rightarrow \texttt{EPKNAVG})$: 
Assume that $\mathcal{H}$ satisfies \texttt{EAVG} for some $(\tilde{\eta}, \tilde{F})$.
Then, by setting $\psi(t)=t$ and $F=\tilde{F}$, $\mathcal{H}=(\tilde{\eta}, F)$ satisfies \texttt{EPKNAVG}.

$(\texttt{EPKNAVG} \Rightarrow \texttt{EAVG})$: 
Assume that $\mathcal{H}$ satisfies \texttt{EPKNAVG} for some $(\eta, F)$. 
Then, 
\begin{align}
\mathcal{H}(X\mid Y) &= \eta\left( \mathbb{M}_{\psi}\left[F(p_{X\mid Y}(\cdot\mid Y))\right] \right) \\
&= \eta \left( \psi^{-1}\left( \sum_{y}p_{Y}(y)\psi(F(p_{X\mid Y}(\cdot\mid y))) \right) \right) \\ 
&= \eta\circ \psi^{-1} \left( \vE_{Y}\left[\psi\circ F(p_{X\mid Y}(\cdot\mid Y))\right] \right).
\end{align}
Thus, if $\psi$ is strictly increasing and $\psi\circ F$ is concave, then $\mathcal{H}=(\eta\circ \psi^{-1}(t), \psi\circ F(t))$ satisfies \texttt{EAVG}, because $\eta\circ \psi^{-1}$ is strictly increasing. 
Similarly, if $\psi$ is strictly decreasing and $\psi\circ F$ is convex, then $-\psi\circ F$ is concave and $\eta\circ\psi^{-1}(-t)$ is strictly increasing. Hence, $\mathcal{H}=(\eta\circ
\psi^{-1}(-t), -\psi\circ F(t))$ satisfies \texttt{EAVG}.
\end{proof}

\section{Proof of Proposition \ref{prop:AC_conditional_entropy_EAVG}} \label{proof:AC_conditional_entropy_EAVG}
\begin{proof}
Let $\alpha=2$, $\mathcal{X}=\mathcal{Y}=\{0, 1\}$, and consider two distributions $p_{X}^{(i)}$, $i=0, 1$, with
\[
(p_{X}^{(0)}(0), p_{X}^{(0)}(1))=(0.9, 0.1), \quad (p_{X}^{(1)}(0), p_{X}^{(1)}(1))=(0.1, 0.9).
\]
Suppose, for contradiction, that there exists an $(\eta, F)$-entropy satisfying \texttt{EAVG} such that 
\begin{align}
H_{2}^{\text{C}}(X|Y) &= \eta\left( \vE_{Y}\left[F(p_{X\mid Y}(\cdot \mid Y))\right] \right).
\end{align}
First, consider the case in which $Y$ is independent of $X$ and $p_{X\mid Y}=p_{X}^{(i)}$ for $i=0,1$. Using \cite[Eq.~(23)]{370121}, we obtain
\begin{align}
H_{2}^{\text{C}}(X|Y) &= H(p_{X}^{(i)}) = \eta(F(p_{X}^{(i)})), \qquad i=0,1.
\end{align}
By the symmetry of Shannon entropy, 
\begin{align}
H(p_{X}^{(0)}) = -0.9\log 0.9 - 0.1\log 0.1 = H(p_{X}^{(1)}).
\end{align}
Thus, we obtain $\eta(F(p_{X}^{(0)})) = \eta(F(p_{X}^{(1)}))$, which implies that
\begin{align}
F(p_{X}^{(0)}) = F(p_{X}^{(1)}). \label{eq:identity}
\end{align} 
Now, consider the following joint probability distribution:
\[
p_{X, Y}(x, i)=\frac{1}{2}p_{X}^{(i)}(x), \qquad i=0,1.
\]
Then, 
\begin{align}
H_{2}^{\text{C}}(X|Y) &= \eta\left( \vE_{Y}\left[F(p_{X\mid Y}(\cdot\mid Y))\right] \right) \\ 
&= \eta\left( \frac{1}{2} F(p_{X}^{(0)}) + \frac{1}{2}F(p_{X}^{(1)}) \right) \\
&\overset{(a)}{=} \eta (F(p_{X}^{(0)})) = H(p_{X}^{(0)}) \approx 0.3251, \label{eq:approx_01}
\end{align}
where $(a)$ follows from Eq.~\eqref{eq:identity}. 
By the definition of $H_{2}^{\text{C}}(X|Y):= \min_{r_{X\mid Y}}-2\sum_{x}p_{X}(x)\log\sum_{y}p_{Y\mid X}(y | x)r_{X\mid Y}(x | y)^{\frac{1}{2}}$, 
choosing $(r_{X\mid Y}(0|0), r_{X\mid Y}(1|0))=(1, 0)$ and $(r_{X\mid Y}(0|1), r_{X\mid Y}(1|1))=(0, 1)$ yields 
\begin{align}
H_{2}^{\text{C}}(X|Y) &\leq -2\log 0.9 \approx 0.2107. \label{eq:approx_02}
\end{align}
Therefore, Eqs.~\eqref{eq:approx_01} and \eqref{eq:approx_02} yield a contradiction.
\end{proof}

\section{Proof of Theorem \ref{thm:CRE_DPI_g-Bayes}} \label{proof:CRE_DPI_g-Bayes}

\begin{proof}
$1)$ It suffices to show that $V_{\varphi, g}(X)\leq V_{\varphi, \psi, g}(X|Y)$. 
For an arbitrary $a\in \mathcal{A}$, consider the constant decision rule $\tilde{\delta}(y)\equiv a$. 
Then, we obtain the following: 
\begin{align}
V_{\varphi, \psi, g}(X|Y) &\geq \mathbb{M}_{\varphi}\left[\mathbb{M}_{\psi}\left[ g(X, \tilde{\delta}(Y)) \mid X\right]\right] \\ 
&= \mathbb{M}_{\varphi}[g(X, a)].
\end{align}
Because $a\in \mathcal{A}$ is arbitrary, it follows that 
\begin{align}
V_{\varphi, \psi, g}(X|Y) &\geq \max_{a}\mathbb{M}_{\varphi}[g(X, a)] = V_{\varphi, g}(X). 
\end{align}

$2)$ Suppose that $X-Y-Z$ forms a Markov chain. 
It suffices to show that $V_{\varphi, \psi, g}(X | Z)\leq V_{\varphi, \psi, g}(X|Y)$. 
For an arbitrary decision rule $\delta\colon \mathcal{Z}\to \mathcal{A}$, we obtain 
\begin{align}
&\mathbb{M}_{\psi}[g(X, \delta(Z)) \mid X=x] \notag \\ 
&= \psi^{-1}\left( \sum_{z}p_{Z\mid X}(z\mid x)\psi(g(x, \delta(z))) \right) \\ 
&= \psi^{-1}\left( \sum_{z}\left( \sum_{y}p_{Z\mid X, Y}(z | x, y)p_{Y\mid X}(y | x) \right)  \psi(g(x, \delta(z)))\right) \\
&\overset{(a)}{=} \psi^{-1}\left( \sum_{y}p_{Y\mid X}(y\mid x)\sum_{z}p_{Z\mid Y}(z\mid y)\psi(g(x, \delta(z))) \right) \\ 
&\overset{(b)}{\leq} \psi^{-1}\left( \sum_{y}p_{Y\mid X}(y\mid x)\psi(g(x, \vE_{Z}[\delta(Z)\mid Y=y])) \right) \\ 
&= \mathbb{M}_{\psi}[g(X, \vE_{Z}[\delta(Z)\mid Y]) \mid X=x], 
\end{align}
where $(a)$ follows from the Markov property $X-Y-Z$, and $(b)$ follows from Jensen's inequality. 
Thus, by the monotonicity of the KN mean $\mathbb{M}_{\varphi}[\cdot]$, we obtain 
\begin{align}
\mathbb{M}_{\varphi}\left[\mathbb{M}_{\psi}[g(X, \delta(Z)) \mid X]\right]
&\leq \mathbb{M}_{\varphi}\left[\mathbb{M}_{\psi}[g(X, \vE_{Z}[\delta(Z) | Y]) \mid X]\right] \\
&\leq V_{\varphi, \psi, g}(X|Y).
\end{align}
Because $\delta$ is arbitrary, we obtain 
\begin{align}
V_{\varphi, \psi, g}(X|Z) &\leq V_{\varphi, \psi, g}(X|Y).
\end{align}
This proves \texttt{DPI}. 
\end{proof}


\end{document}